\begin{document}
 
\title{\bf On weak regularity requirements of the relaxation modulus in  viscoelasticity}
\date{}
\maketitle     


{\small
\begin{center}
 {\sc Sandra Carillo} \\
Dipartimento di Scienze di Base e Applicate \\
per l'Ingegneria, 
Universit\`{a}   di Roma  {\textsc{La Sapienza}},\\
 Via Antonio Scarpa 16,  00161 Rome,  
Italy \\
\&\\
I.N.F.N. - Sezione Roma1, Gr. IV - M.M.N.L.P.,  Rome, Italy \\[10pt]
 {\sc Michel Chipot} \\
IMath,  University of Z\"urich\\
Winterthurerstrasse 190, 8057, Z\"urich,  Switzerland \\
[10pt]
 {\sc Vanda Valente} \\
Istituto per le Applicazioni del Calcolo {\it
M. Picone} \\

  Via dei Taurini 19, 00185 Roma, Italy  \\[10pt]
 {\sc Giorgio Vergara Caffarelli} \\
Dipartimento di Scienze di Base e Applicate \\
per l'Ingegneria, 
Universit\`{a}   di Roma {\textsc{La Sapienza}}  \\
 Via Antonio Scarpa 16,  00161 Rome,   Italy \\[10pt]
\end{center}
}

\numberwithin{equation}{section}
\allowdisplaybreaks

\newtheorem{theorem}{Theorem}[section]
\newtheorem{lemma}[theorem]{Lemma}

 \smallskip

{\bf Abstract.}  
The existence and uniqueness of solution to a
one-dimen\-sional hyperbolic integro-differential problem arising in
viscoelasticity is here considered. The kernel, in the linear viscoelasticity equation,
represents the relaxation function which is characteristic of the considered material. 
Specifically, the case of a kernel,  which does not satisfy the {\it classical} regularity requirements is analysed. 
This choice is suggested by applications according to
the literature to model  a wider variety of materials.  
A notable example of kernel, not satisfying the classical regularity requirements, is represented by a wedge continuous function.  Indeed, the linear integro-differential viscoelasticity equation, characterised by a suitable wedge continuous relaxation function,   is shown to give the classical linear wave equation via a  limit procedure.

  \newcommand{\nor}[1]{\|#1\|}
\newcommand{\R}   {\mathbb{R}}
\newcommand{\N}   {\mathbb{N}}
\newcommand{\C}   {\mathbb{C}}
\newcommand{\Z}   {\mathbb{Z}}
\newcommand{\0}   {\mbox{\bf 0}}
\newcommand{\res}    {\mbox{\rm res}}
\newcommand{\eproof} {\framebox{}}
\newcommand{\app }[1]{\stackrel{\mbox{\scriptsize \rm(A.{#1})}}{=}}
\newcommand{\V}     {\mbox{\it V}}
\newcommand{\U}     {\mbox{\it U}}
\newcommand{\vv}     {\mbox{\it v}}
\newcommand{\uu}     {\mbox{\it u}}
\newcommand{\ww}     {\mbox{\it w}}
\newcommand{\Skw}     {\mbox{\rm Skw}\,}
\newcommand{\Sym}    {\mbox{\rm Sym}\,}
\newcommand{\skw}     {\mbox{\rm skw}\,}
\newcommand{\sym}    {\mbox{\rm sym}\,}
\newcommand{\grad}    {\mbox{\rm grad}\,} 
\newcommand{\divv}    {\mbox{\rm div}\,} 
\newcommand{\Dym}    {\mbox{\scriptsize \rm Dym}}

{\bf{Keywords:}}~{Materials with memory; Viscoelasticity; hyperbolic integro-differential problem; Regular kernel integro-differential systems; Singular kernel integro-differential systems}

{\bf{AMScode:}}~{74H20, 35Q74, 45K05, 74D05}

\section{Introduction}
\setcounter{equation}{0}
 \subsection{General framework of the problem}
The mathematical model of viscoelasticity aims to take into account the behaviour 
of those materials whose mechanical behaviour is determined not only by the present 
but also on its past {\it history}.  Those materials whose mechanical and/or thermodynamical 
response depends significantly  on their past history are termed {\it materials with memory} 
according to the literature, see \cite{Fabrizio-Morro-book92, FL} where
the   physical interpretation of the model as well as its 
thermodynamical admissibility are studied. Since then  a wide variety of results and 
applications have been obtained in many different fields. Among the many, referring 
to the model of viscoelastic body, 
to give a feeling of the many 
different  areas in which this model turns out to be of interest, we mention some
 bio-mechanical applications.
In \cite{bones18}  
 synthetic tissues which mimic human bones  are investigated. Indeed, the 
viscoelastic model is applied to a variety of biologically inspired tissues; as an example, 
 in \cite{bones18}  bones, while in \cite{cardio} cardiological tissues are considered. 
A model of apples regarded as viscoelastic bodies is studied in \cite{apples, apples2018}.
%
%
 The present investigation concerns viscoelastic bodies and their mechanical behaviour
 aiming to widen the range of applicative cases the  theory can be applied to. 
 Accordingly, following the spirit of   previous investigations, 
further generalisations of the  classical model, suggested by applicative examples, 
and  not previously  considered, are addressed to.
 
%
The present investigation is part of a   research project whose interest is
focussed on materials with memory and, on one side their mechanical or thermodynamical 
characterisation \cite{BUMI, [82]}, on the other one to study of assigned initial boundary 
value problems  to  answer to the question of existence and, possibly, uniqueness of the 
admitted solution.
Under this viewpoint, the growing applicative interest on new materials, both
artificial, such as polymers, and/or materials of biological origin, requires to
adapt and, possibly, consider less restrictive assumptions when the mathematical
model is constructed. Accordingly, when a
materials characterized by mechanical response which is  modeled by a 
Volterra type integro-differential equation, is studied generalized  
kernels  need to be considered.  
Thus,   in  \cite{DIE2013} a one-dimensional viscoelasticity 
   problem characterised by a singular kernel is studied and, later, extended in 
    \cite{pericolanti, S2015}. 

The novelty of the present investigation consists in new
relaxed requirements 
{imposed on}
the kernel of the integro-differential viscoelasticity problem.
Specifically, inspired by \cite{GVC1, GVC2}, relaxations functions exhibiting a jump 
discontinuity in their first derivative are considered.

 \subsection{Mathematical of the problem}

\medskip
The one-dimensional problem is investigated. Accordingly, 
the linear integro-differential equation which models the displacement response
in a one dimensional viscoelastic body, whose configuration is denoted as 
$\Omega=(a,b)\subset\R$, can be written as  
\begin{equation} \label{eql}
{u}_{tt} = G(0) {u}_{xx} 
     + \int_{0}^t \dot{G}(t-\tau)
{u}_{xx}(\tau) d\tau + f, 
\end{equation}
where  $u$ indicates the displacement while  the
history of the material as well as an external force, if present,  are  included in the
term   $f$. The initial boundary value problem is assigned when the following initial 
and boundary conditions
\begin{equation} \label{ibcu}
 {u}\vert_{t=0}=u_0(x),\quad {u}_t\vert_{t=0}
={u}_1(x),
\quad {u}\vert_{\partial\Omega\times(0,T)}=0~~, ~~ t< T
\end{equation}
are imposed. The classical model (see for instance \cite{D,D1}), when a one-dimensional
 isotropic and homogeneous viscoelastic body is considered, prescribes that  the kernel $G$ 
 satisfies the conditions:
\begin{equation}
\dot{{G}} \in L^1(\R^+)~~,~~{{G}}(t) = {{G}}_0
+ \displaystyle{\int_{0}^{t} { \dot{G}(s) ~d s}~,~{{G}}(\infty)} =
 \displaystyle{\lim_{t\to \infty}{{G}}(t)}
 \end{equation}
where ${{G}}(\infty) >0$;  the {\it fading memory property} is also enjoyed, that is
\begin{equation*}\label{9}
{~\forall\epsilon > 0~~ \exists \, \tilde a = a \,(\epsilon,
E^t)\in\R^+ \textstyle{s.t.}~\forall a > \tilde a, 
\displaystyle{\left\vert\int_{0}^{\infty}\!\! { \dot G(s+a)
E^t(s) ~d s} \right\vert \! < \! \epsilon ~}~,
}
\end{equation*}
where ${ E}^t(\tau) := { E}(t-\tau)$ denotes the {\it{strain past history}} of the material.
The  weak formulation of the problem \eqref{eql}-  \eqref{ibcu} reads 
\begin{equation}\label{step2} 
 \displaystyle u({t})=\int_0^t {K}(t-\tau) u_{xx}(\tau) d\tau + u_1 t +u_0
 +\int_0^t d \tau\int_0^\tau f(\xi) d\xi\,,
\end{equation}
where $K$ is  termed {\it integrated relaxation function} since it denotes the integral 
of the relaxation function  defined as
\begin{equation}\label{K} 
 \displaystyle K(\xi):= \int_0^\xi {G}(\tau)  d\tau ~.
\end{equation}
Now,  the relaxation function $G(t)$ is assumed to satisfy the relaxed conditions 
\begin{equation}\label{G}
\forall~ t\in(0,\infty), ~~~G(t)>0,\quad   G 
~~ \text{is non-increasing and convex.}
\end{equation}
In particular, we impose that  the relaxation function $ {G}$ is { nonincreasing and convex} 
since these conditions are induced by the adopted physical model 
\cite{Fabrizio-Morro-book92, GG}.
In the present study, the choice to adopt  (\ref{step2}) as the model equation to study 
   the viscoelasticity problem allows to impose weaker  requirements  on 
   the relaxation function $ {G}$ aiming to  model a wider class of materials.
Thus, one of the novelties is that the function $ {G}$ is assumed  to be 
\begin{equation}\label{Gt}
G \in C^0[0,T] ~~ \forall T\in \R~,
\end{equation}
 {but no requirement  is imposed on its time derivatives.}
Note that, consequently the results are {referring to} 
a  non regular relaxation functions, 
which may exhibit jump discontinuities in the  time derivative $\dot G$, not defined on the 
whole interval $(0,T)$ according to an example,  presented in \cite{GVC1},    where the initial 
value $G(0)$ of the relaxation function is finite. This generalisation is introduced to describe  
viscoelastic materials which are not included in previous studies in \cite{DIE2013} 
wherein the relaxation functions is 
regular, i.e.  $G \in L^1(0,T)\cap C^2(0,T)$, $ \forall T\in \R$, but may be unbounded  at 
$t=0$. 
 \subsection{A notable example}  
A special case is considered in this Subsection to stress the interest of the case under investigation. Let
\begin{equation}\label{G1}
G(s)=\left\{\begin{array}{l}
{G_{\infty}- G_0 \over a} s +G_0 ~~~~~ 0\le s\le a\\
G_{\infty}~~~~~~~~~\qquad~~~~~~~ s\ge a 
\end{array}
\right.\end{equation}
\smallskip
as depicted in the following Fig. 1.
\begin{figure}[h]
\centering
\includegraphics[width=0.70\textwidth,angle=0] {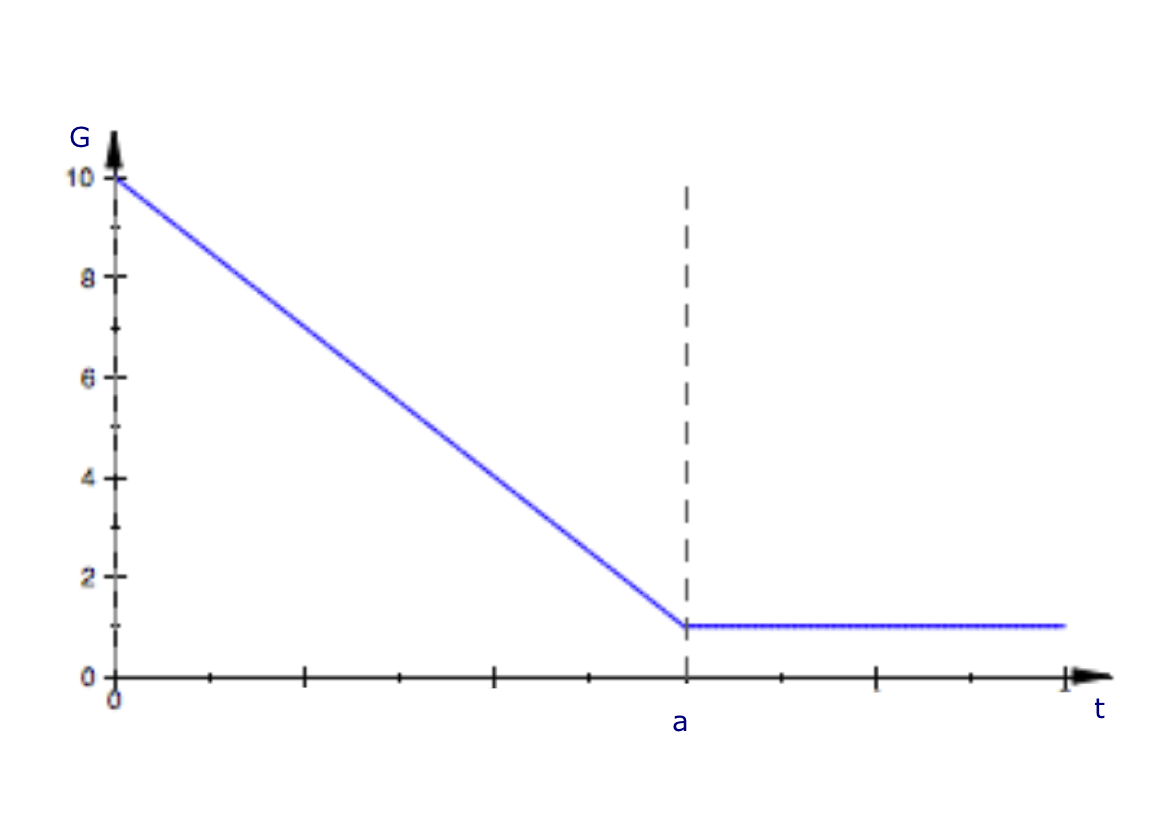}
\caption{\small Relaxation function} 
\end{figure}
The evolution equation (\ref{eql})
\begin{equation} \label{eql-ex}
{u}_{tt} = G(0) {u}_{xx} 
     + \int_{0}^t \dot{G}(t-\tau)
{u}_{xx}(\tau) d\tau + f, 
\end{equation}
now, on substitution of the given relaxation function $G$,  reads
\begin{equation} \label{eql-ex2}
{u}_{tt} = G(0) {u}_{xx} 
     + \left[ G_{\infty}- G_0\right]  \int_{\tilde t}^t {1 \over a} 
{u}_{xx}(s) ds + f, ~~~ \text{where} ~~~ \tilde t:=\max \{0, t-a\}~~,
\end{equation}
that is, when $a \to 0$, $t-a>0$ definitely. Hence, on application of the 
mean value theorem, the limit $a \to 0$ gives
\begin{equation} \label{eql-ex3}
{u}_{tt} = G(0) {u}_{xx} 
     +\left[ G_{\infty}- G_0\right]  {u}_{xx}  + f, 
     \end{equation}
i.e.
\begin{equation} \label{eql-ex4}
{u}_{tt} =      G_{\infty}      {u}_{xx}  + f, 
     \end{equation}
where $G_0$ does not appear since  the two terms cancel out and, hence, 
the linear wave equation is obtained. This example shows the relation between
linear wave equation and the linear viscoelasticity equation.
Specifically, when we consider the wedge continuous 
relaxation function in the example, the linear viscoelasticity equation admits 
the linear wave equation as its limit  wherein the speed is $c^2= G_{\infty}$.

\subsection{Content outline}
The outline of the article is as follows. In Section 2, a suitable  {\it mollified} formulation of 
the relaxation function is introduced, then corresponding  approximated regular problems, 
which depend on a small parameter $\varepsilon\in\R^+$,
are considered. The two key ingredients are, on one side, the link between the assigned
initial boundary value  problem \eqref{eql} -  \eqref{ibcu} and its integral weak formulation
 \eqref{step2}. On the other side, we prove that the limit, as $\varepsilon\to 0$, of the 
 sequence of solutions admitted by the  approximated regular problems coincides with
 the solution of the non regular initial boundary value  problem under investigation.
Suitable {\it a priori} estimates are proved: the proof of existence of the weak solution 
admitted by the problem under investigation, and constructed via  limit process, is based 
on them.  

The subsequent  Sections 3 and 4, are devoted, in turn, to show  the existence, 
and its uniqueness,  of the weak solution admitted by the problem.  Sections 3 is concerned 
about the limit of the sequence of solutions to the approximated problems.
A key role is played by the estimate in Section 2. 
The proof of the  uniqueness of the weak solution provided in Section 3 is given in  Section 4. 

\noindent
Perspectives and open problems are mentioned in the closing Section 5.

\section{Approximated problems }

This Section opens with the introduction of a mollified  relaxation function which depends on
a small parameter $\varepsilon\in\R^+$. Corresponding to each value of the parameter 
$\varepsilon$, a regular kernel problem is written. Thus, a sequence of {\it approximated} 
regular problems is constructed. On application of a Lemma, proved in \cite{DIE2013}, 
suitable {\it a priori} estimates are obtained. 
First of all, we introduce
 \begin{equation}\label{1}
G^{\varepsilon}(t):=\int_{t-\varepsilon}^{t+\varepsilon}{\rho 
\left({{t-\tau}\over \varepsilon}\right)~ {1\over \varepsilon} G( \varepsilon+\tau) d \tau}
\end{equation}
where $\rho:\R\to \R^+$ is a $C^\infty$-function with compact support in the unit intervall (-1,1), such that its integral is equal to one, that is
\begin{equation}\label{1.0}
\int_{\R}{\rho 
\left({{t-\tau}\over \varepsilon}\right)~ {1\over \varepsilon}  d \tau}=1;
\end{equation}
 so that \eqref{1} can be rewritten as
\begin{equation}
G^{\varepsilon}(t):=\int_{\R}{\rho 
\left({{t-\tau}\over \varepsilon}\right)~ {1\over \varepsilon} G( \varepsilon+\tau) d \tau}.
\end{equation}
Note that by \eqref{1} one has also 
 \begin{equation}\label{1.1}
G^{\varepsilon}(t):=\int_{-\varepsilon}^{\varepsilon}{\rho 
\left({{\tau}\over \varepsilon}\right)~ {1\over \varepsilon} G( \varepsilon+ t-\tau) d \tau},
\end{equation}
so that if $G$ is nonnegative, nonincreasing and convex so is $G^{\varepsilon}$.

Now, consider the approximated problem $P^{\varepsilon}$ given by the following
integro-differential equation
\begin{equation} \label{eql-ep}
 {u}^{\varepsilon}_{tt} = G^{\varepsilon}(0) {u}^{\varepsilon}_{xx} 
     + \int_{0}^t \dot{G}^{\varepsilon}(t-\tau)
{u}^{\varepsilon}_{xx}(\tau) d\tau + f,~~0<\varepsilon\ll 1,
\end{equation}
where $G^{\varepsilon}$ is given by \eqref{1} and, hence, $G^{\varepsilon}(0)$ as well as $ \dot{G}^{\varepsilon}$ are well defined, together with associated initial and boundary conditions
\begin{equation} \label{ibcu-ep}
 {u}^{\varepsilon}\vert_{t=0}=u_0,\quad {u}^{\varepsilon}_t\vert_{t=0}
={u}_1(x),
\quad {u}^{\varepsilon}\vert_{\partial\Omega\times(0,T)}=0~~, ~~ t< T.
\end{equation}
Notably, the problem $P^{\varepsilon}$  is 
regular and, hence, admits a unique solution \cite{D,D1}. In addition,  the corresponding
weak formulation is:
\begin{equation}\label{step2ep} 
 \displaystyle u^{\varepsilon}({t})=\int_0^t {K}^{\varepsilon}(t-\tau) u^{\varepsilon}_{xx}(\tau) d\tau + u_1 t +u_0
 +\int_0^t d \tau\int_0^\tau f(\xi) d\xi\,,
\end{equation}
where
\begin{equation}\label{Kep} 
 \displaystyle K^{\varepsilon}(\xi):= \int_0^\xi {G}^{\varepsilon}(\tau)  d\tau ~.
\end{equation}
The next Section is devoted to consider the limit, as ${\varepsilon}\to 0$ of
such an approximated integral problem to the integral problem (\ref{step2}). 

The next aim of this Section is to establish an estimate which allows to 
prove the needed convergence of the approximated solutions as 
${\varepsilon}$ goes to zero.

The following Lemma 2.1 
provides the needed estimates   to prove the
 Theorem, provided in the  subsequent Section, which  allows us to establish
 the existence result we aimed to.

As proved  in \cite{D}, \cite{D1}, the regular  linear problem 
(\ref{eql-ep}) - (\ref{ibcu-ep}), where $G^{\varepsilon}$ is given in \eqref{1},  admits a unique solution.
In particular, the following Lemma can be stated.

\smallskip

\noindent
{\bf Lemma 2.1}
 \label{2.1}  { \it
 Let ${u}^{\varepsilon}$ denote the unique solution admitted by the problem 
{\rm (\ref{eql-ep})--(\ref{ibcu-ep})}, then it follows
\begin{align}
 \label{form1} 
 &
  {1\over{2}}  {d\over{dt}} \int_{\Omega}{G}^{\varepsilon}(t)\vert{u}^
{\varepsilon}_x\vert^2 dx
-{1\over{2}} {d\over{dt}} \int_{0}^t ds \int_{\Omega} \dot{G}^{\varepsilon}(s)
\vert{u}^{\varepsilon}_x(t)-{u}^{\varepsilon}_x(t-s)\vert^2 dx +
\notag
\\
&\qquad\qquad
 + {1\over{2}}
{d\over{dt}}\int_{\Omega}\vert{u}^{\varepsilon}_t\vert^2 dx   
= \int_{\Omega} f\,{u}^{\varepsilon}_t dx +
 {1\over{2}} \int_{\Omega} \dot{ {G}^{\varepsilon}}(t) \vert{u}^{\varepsilon}_x\vert^2 
dx +\\
&
\notag
-{1\over{2}} \int_{0}^t ds \int_{\Omega} \ddot{ {G}^{\varepsilon}}(s)
\vert{u}^{\varepsilon}_x(t)-{u}^{\varepsilon}_x(t-s)\vert^2  dx~. 
\end{align} 

}
\rm

\smallskip

\noindent {\bf Proof.} 
The proof follows the lines of the proof  of Lemma 2.1 in \cite{DIE2013}.

Specifically, when we recall the conditions  (\ref{G}), imposed on $G$, and we consider 
$G^{\varepsilon}$, they imply $\dot G^{\varepsilon}<0$ and  $\ddot G^{\varepsilon}>0$,  in the 
integration  time interval $(0,t)$. Hence, these conditions, combined with integration over the
 time interval $(0,t)$, leads to
\begin{align}
\label{ineq1} 
&
 \displaystyle{{1\over{2}}   \int_{\Omega} G^{\varepsilon}(t)\, \vert{u}_x\vert^2\, dx + 
{1\over{2}} \int_{\Omega} \vert{u}_t\vert^2\, dx} 
\\ 
&
\notag
 \displaystyle{\le \int_{\Omega} \int_0^t f\,{u}_t\, dx \,ds + {1\over{2}}
\int_{\Omega} G^{\varepsilon}(0)\, \vert{u}_x(0)\vert^2\, dx + {1\over{2}} \int_{\Omega} 
\vert{u}_1\vert^2\, dx}\,.
\end{align}
The latter, when the initial data are taken into account, implies
\begin{equation}
{1\over{2}} \int_{\Omega} G^{\varepsilon}(t)\, \vert{u}_x\vert^2\, dx + {1\over{2}}
\int_{\Omega} \vert{u}_t\vert^2\, dx - \int_0^t \int_{\Omega}
\vert{u}_t\vert^2\, dx ds\le C(f, {u}_1, u_0, G_0). 
\end{equation}
Hence, on application of Gronwall's lemma,
\begin{equation}
\displaystyle{{1\over{2}} \int_{\Omega} G^{\varepsilon}(t)\, \vert{u}_x\vert^2\, dx + 
{1\over{2}}
\int_{\Omega} \vert{u}_t\vert^2\, dx \le e^T C(f, {u}_1, u_0, G_0).}
\end{equation}
Since   $G$ is nonincreasing, choosing $2\varepsilon \leq 1$, $t\leq T$ and using \eqref{1.0} one has 
\begin{equation}
G^{\varepsilon}(t)\geq\int_{t-\varepsilon}^{t+\varepsilon}{\rho 
\left({{t-\tau}\over \varepsilon}\right)~ {1\over \varepsilon} G( 2\varepsilon+t) d \tau} \geq G(1+T).
\end{equation}
Thus it holds
\begin{equation}\label{2.10}
\displaystyle{ {1\over{2}} \int_{\Omega}  \vert{u}_x\vert^2\, dx + {1\over{2}}
\int_{\Omega} \vert{u}_t\vert^2\, dx \le \alpha e^T
C(f, {u}_1, u_0)}, 
\end{equation}
wherein $\alpha= \max\{(G(T+1))^{-1}, 1\}.$
This last estimate  is required  later on.

\section{Integral Problem: Solution Existence}
The aim of this Section is to prove that, when we let ${\varepsilon} \to 0$, there exists
a weak limit, say  $u$, of any sequence  of solutions $u^{\varepsilon}$ of   
the approximated problem,  $P^{\varepsilon}$  \eqref{step2ep}.  
That is, consider any sequence of functions $u^{\varepsilon}$, solutions to  \eqref{step2ep}, 
it weakly converges to  $u$, the solution to (\ref{eql}) -- (\ref{ibcu}).
The key fact is the regularity of the problem $P^{\varepsilon}$ which implies the existence and
uniqueness of its  solution $u^{\varepsilon}$.  
Then, on use of the estimate (\ref{2.10}), it follows  that there exists
 a subsequence $\{ {\varepsilon_h}\}, h\in \N$ such that there exists a convergent subsequence 
of solutions $\{ u^{\varepsilon_h}\}$  
\begin{align}
\label{24}
&
u^{{\varepsilon_h}}\longrightarrow~~ u~~ \textrm{weakly ~in} 
~ H^1(0,T, H^1_0(\Omega)) ~~ \textrm{as} ~ {{\varepsilon_h} \to 0};
\\
&
\label{25}
u^{{\varepsilon_h}}\longrightarrow~~ u~~ \textrm{strongly ~in} ~ L^2(D)  
~~ \textrm{as} ~ {{\varepsilon_h} \to 0};
\end{align}
hence,
\begin{equation}\label{cclim} 
 {\displaystyle{\exists \,\, u(t)=\lim_{{\varepsilon_h} \to 0} {u^{{\varepsilon_h}}}({t})}\,
~~ \textrm{in} ~~L^2\left((a,b)\times(0,T)\right),}
\end{equation}
where $D=(a,b)\times(0,T)$ and $u^{{\varepsilon_h}}$ is solution to the problem  (\ref{eql-ep}) - (\ref{ibcu-ep}). Thus,  the following result can be  proved.

\begin{theorem}  The  integral problem (\ref{step2}) admits a weak solution represented by the function $ {\displaystyle{u(t)=\lim_{{\varepsilon_h} \to 0} {u^{{\varepsilon_h}}}({t})}}$.

 \end{theorem}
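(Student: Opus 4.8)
The goal is to pass to the limit $\varepsilon_h\to 0$ in the weak (integral) formulation \eqref{step2ep} and recover \eqref{step2}. Since the approximating family $\{u^{\varepsilon_h}\}$ already satisfies the convergences \eqref{24}--\eqref{25} (which rest on the uniform estimate \eqref{2.10}), the plan is to fix a test function $\varphi\in C_0^\infty(D)$, multiply \eqref{step2ep} by $\varphi$, integrate over $D=(a,b)\times(0,T)$, and control each term separately as $h\to\infty$. The terms $u_1 t+u_0+\int_0^t d\tau\int_0^\tau f(\xi)d\xi$ are independent of $\varepsilon$, so they pass to the limit trivially. The left-hand side $\int_D u^{\varepsilon_h}\varphi$ converges to $\int_D u\varphi$ by the strong $L^2(D)$ convergence \eqref{25} (weak convergence already suffices here). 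So the entire burden is the memory term $\int_0^t K^{\varepsilon}(t-\tau)u^{\varepsilon}_{xx}(\tau)\,d\tau$.

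For the memory term I would split the convergence into two independent pieces. First, the kernel: from the construction \eqref{1}--\eqref{Kep} and the fact that $G$ is continuous on $[0,T]$ (assumption \eqref{Gt}), standard mollifier properties give $G^{\varepsilon}\to G$ uniformly on $[0,T]$, hence $K^{\varepsilon}\to K$ uniformly on $[0,T]$ as well; in particular $\|K^{\varepsilon}\|_{L^\infty(0,T)}$ is bounded uniformly in $\varepsilon$. Second, the solution derivatives: the estimate \eqref{2.10} bounds $u^{\varepsilon}_x$ uniformly in $L^\infty(0,T;L^2(\Omega))$, and \eqref{24} upgrades this to $u^{\varepsilon_h}_x\rightharpoonup u_x$ weakly in $L^2(D)$. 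To handle $u^{\varepsilon}_{xx}$, which is only a distribution, I would integrate by parts in $x$ inside the convolution term, moving the second $x$-derivative onto the test function: writing the weak form against $\varphi$ one gets $\int_0^T\!\!\int_a^b\Big(\int_0^t K^{\varepsilon}(t-\tau)u^{\varepsilon}_x(\tau,x)\,d\tau\Big)\varphi_x(t,x)\,dx\,dt$ (with no boundary contribution since $\varphi$ has compact support and $u^{\varepsilon}\in H_0^1$). Now the inner convolution $\int_0^t K^{\varepsilon}(t-\tau)u^{\varepsilon}_x(\tau)\,d\tau$ is a bounded linear operation in $\tau$ with uniformly bounded kernel, so combining the uniform kernel convergence $K^{\varepsilon}\to K$ with the weak $L^2(D)$ convergence $u^{\varepsilon_h}_x\rightharpoonup u_x$ lets me pass to the limit in this bilinear expression.

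The main obstacle is precisely this last step: passing to the limit in a product where one factor converges only weakly ($u^{\varepsilon_h}_x\rightharpoonup u_x$ in $L^2(D)$) and the other is an $\varepsilon$-dependent convolution. The clean way around it is to write the difference $\int_0^t K^{\varepsilon_h}(t-\tau)u^{\varepsilon_h}_x(\tau)\,d\tau-\int_0^t K(t-\tau)u_x(\tau)\,d\tau$ as the sum of $\int_0^t\big(K^{\varepsilon_h}(t-\tau)-K(t-\tau)\big)u^{\varepsilon_h}_x(\tau)\,d\tau$ and $\int_0^t K(t-\tau)\big(u^{\varepsilon_h}_x(\tau)-u_x(\tau)\big)\,d\tau$; the first is controlled in $L^2(D)$ by $\|K^{\varepsilon_h}-K\|_{L^\infty}\,\|u^{\varepsilon_h}_x\|_{L^2(D)}\to 0$ using the uniform estimate \eqref{2.10}, and the second tends to $0$ weakly in $L^2(D)$ because convolution against the fixed $L^\infty$ kernel $K(t-\cdot)$ is a bounded (indeed compact, by an Aubin--Lions/Arzel\`a--Ascoli argument in the $t$ variable) operator, which maps the weakly convergent sequence $u^{\varepsilon_h}_x\rightharpoonup u_x$ to a weakly convergent — in fact strongly convergent, after testing against $\varphi_x$ — sequence. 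Pairing with the smooth $\varphi_x$ then closes the argument, and since $\varphi\in C_0^\infty(D)$ was arbitrary we conclude that $u$ satisfies \eqref{step2} in the sense of distributions; the regularity $u\in H^1(0,T;H_0^1(\Omega))$ from \eqref{24} shows $u$ is indeed a weak solution, completing the proof. $\eproof$
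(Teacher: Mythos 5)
Your proposal follows essentially the same route as the paper: test against $\varphi$, integrate by parts in $x$ to shift the second derivative onto the test function, and combine the convergence of $K^{\varepsilon_h}$ to $K$ with the weak $L^2$ convergence of $u^{\varepsilon_h}_x$ to $u_x$ coming from the uniform estimate \eqref{2.10}. Your explicit splitting of the memory term into $\int_0^t(K^{\varepsilon_h}-K)(t-\tau)u^{\varepsilon_h}_x(\tau)\,d\tau$ plus $\int_0^t K(t-\tau)(u^{\varepsilon_h}_x-u_x)(\tau)\,d\tau$, with the uniform (rather than merely pointwise) convergence $K^{\varepsilon_h}\to K$, is a welcome sharpening of the step the paper only asserts, but it is a refinement of the same argument rather than a different one.
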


{\begin{proof}[\bf Proof.]
Consider the approximated problem (\ref{eql-ep}) - (\ref{ibcu-ep}), 
{{which}} admits the unique solution 
$ u^{{\varepsilon_h}}$; the associated weak formulation reads
\begin{equation}\label{ccpb2-eps} 
 \displaystyle{{P^{{\varepsilon_h}}:~~~~~~~~ u^{{\varepsilon_h}}({t})=\int_0^t {K}^
{{\varepsilon_h}}(t-\tau) u^{{\varepsilon_h}}_{xx}(\tau) d\tau + u_1 t + u_0
+\int_0^t d \tau
\int_0^\tau f(\xi) d\xi\, },}
\end{equation}
where 
\begin{equation}\label{Keps} 
 \displaystyle{ K^{{\varepsilon_h}}(\xi):= \int_0^\xi {G}^{{\varepsilon_h}}(\tau)  d\tau ~.}
\end{equation}
Then, introduce the {\it test 
functions} $\varphi$, which depend on both the time and space variables and 
satisfy homogeneous b.c.s at $a$ and $b$, i.e., at the {\it boundaries} of the $
\Omega\subset\R$ domain 
\begin{equation}\label{icphi}
\varphi \in C^\infty(D)~, ~~D=(a,b)\times(0,T),~~ \textstyle{s.t.}~~ \varphi(a,t)=
\varphi(b,t) =0 ~~~ \forall t\in  (0,T)~.
\end{equation}
On integration over $D$, after multiplication  by $\varphi$ of  (\ref{ccpb2-eps}), allows to write
\begin{equation}\label{weak-eps} 
\begin{array}{cl@{\hspace{0.5ex}}c@{\hspace{1.0ex}}l}   
 \displaystyle{ \int\!\!\! \int_D u^{{\varepsilon_h}}({t})\,\varphi \,dx \,dt =
 \int\!\!\! \int_D  \varphi\left\{\int_0^t {K}^{{\varepsilon_h}}(t-\tau) u^{{\varepsilon_h}}_{xx} (\tau) d\tau +  u_1 t \right.}\\ \displaystyle{\qquad\qquad\qquad\qquad\qquad\qquad\qquad\qquad
\left.+ u_0 
+\int_0^t d \tau\int_0^\tau f(\xi) d\xi\, \right\} dx dt .}
\end{array}
\end{equation}
First of all, note that all the terms 
\begin{equation}\label{t1} \displaystyle{ 
 \int\!\!\! \int_D  \varphi\left\{ u_1 t + u_0  +\int_0^t d \tau\int_0^\tau f(\xi) d\xi\, \right
\} dx dt }
\end{equation}
depend only on the initial data and/or on the {\it history}  of the  material 
with memory, supposed  known.  Since all these terms are assumed  regular, then the 
boundedness of $D$, implies the integral over $D$, in (\ref{t1}) is well defined and finite. 
Furthermore,  in (\ref{t1}), ${{\varepsilon_h}}$ does not appear,
hence, all the terms therein are unchanged in the limit ${{\varepsilon_h}}\to 0$.

Accordingly, 
the quantity to focuss the attention on is the following
\begin{equation}\label{pbint-eps} 
 \displaystyle{ 
 \int\!\!\! \int_D \varphi \, dx dt  \int_0^t {K}^{{\varepsilon_h}}(t-\tau) u^{{\varepsilon_h}}_{xx}
 (\tau) d \tau  }.
\end{equation}
On use of the homogeneous boundary conditions  (\ref{icphi}) satisfied by the  test functions 
$\varphi$,    integration by parts with respect to  the space variable, two times, gives : 
\begin{equation}\label{pbint-eps2} 
 \displaystyle{ 
 \int\!\!\!\! \int_D\! \varphi \, dx dt \int_0^t \! {K}^{{\varepsilon_h}}(t-\tau) u^{{\varepsilon_h}}_{xx}
 (\tau) d \tau =   \int\!\!\!\! \int_D \! \varphi \, dx dt {d\over{dx}} \int_0^t\! {K}^{{\varepsilon_h}}
 (t-\tau) u_x^{{\varepsilon_h}} (\tau) d\tau }.
\end{equation}
\begin{equation*}
= - \displaystyle{ 
 \int\!\!\!\! \int_D\! \varphi_x \, dx dt \int_0^t \! {K}^{{\varepsilon_h}}(t-\tau) u^{{\varepsilon_h}}_{x}
 (\tau) d \tau }.
\end{equation*}
The weak convergence of $u^{{\varepsilon_h}}_{x}$ to $u_x$ combined with the pointwise convergence of ${K}^{{\varepsilon_h}}$ to $K$, proved as follows, imply the existence of the weak solution. Indeed, by definition 
\begin{equation}
{K}^{\varepsilon_h}(t)- {K}(t)= \int_0^t {G}^{{\varepsilon_h}}(s)  ds  -\int_0^t {G}(s)  ds=
\end{equation}
\begin{equation*}
 \int_0^t ds \left\{ \int_{s-\varepsilon}^{s+\varepsilon}{\rho 
\left({{s-\tau}\over \varepsilon}\right)~ {1\over \varepsilon} G( \varepsilon+\tau) d \tau} 
- {G}(s) \right\}=
\end{equation*}
\begin{equation*}
 \int_0^t ds \int_{s-\varepsilon}^{s+\varepsilon} d \tau\left\{{\rho 
\left({{s-\tau}\over \varepsilon}\right)~ {1\over \varepsilon} \left[G( \varepsilon+\tau)- {G}(s)\right] }  \right\} \to 0, \text{as}~ \varepsilon \to 0 .
\end{equation*}
since $\lim_{\varepsilon \to 0} \left[G( \varepsilon+\tau)- {G}(s)\right] =0$.
\end{proof}

\section{Integral Problem: Solution Uniqueness}
In this Section the solution uniqueness is considered. 

\begin{theorem}
{Given  the integral problem  {\rm{(\ref{step2})}}, i.e.,
\begin{equation}\label{pb2} 
{ \displaystyle u({t})=\int_0^t {K}(t-\tau) u_{xx}(\tau) d\tau + u_1 t 
+\int_0^t d 
\tau\int_0^\tau f(\xi) d\xi\,}
\end{equation} 
it admits a  unique weak solution.}
\end{theorem}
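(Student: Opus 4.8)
The statement combines existence and uniqueness, so I would organise the proof in two parts. For existence there is nothing new to do: the Theorem of Section~3 already exhibits a weak solution $u(t)=\lim_{\varepsilon_h\to 0}u^{\varepsilon_h}(t)$ of (\ref{step2}), obtained as the weak $H^1(0,T,H^1_0(\Omega))$ limit of the regular approximants $u^{\varepsilon_h}$ through the uniform bound (\ref{2.10}). I would state this explicitly at the outset, so that only uniqueness remains to be argued here.

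For uniqueness the plan is an energy estimate resting on the sign structure (\ref{G}). Let $u,v$ be two weak solutions of (\ref{step2}) with the same data $u_0,u_1,f$, and put $w:=u-v$. Subtracting the two identities (\ref{step2}) removes the term $u_1t$, the datum $u_0$ and the double integral of $f$, leaving
\begin{equation}\label{uniq1}
w(t)=\int_0^t K(t-\tau)\,w_{xx}(\tau)\,d\tau,\qquad w(0)=0.
\end{equation}
Since $K(0)=0$ by (\ref{K}), differentiating (\ref{uniq1}) in time produces no boundary contribution and, using $\dot K=G$, gives
\begin{equation}\label{uniq2}
w_t(t)=\int_0^t G(t-\tau)\,w_{xx}(\tau)\,d\tau,\qquad w_t(0)=0,
\end{equation}
interpreted weakly (in $H^{-1}(\Omega)$ in space). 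The regularity $w\in H^1(0,T,H^1_0(\Omega))$ inherited from the two solutions legitimises every pairing used below.

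I would then test (\ref{uniq2}) with $w$ and integrate over $\Omega\times(0,T)$. The left-hand side telescopes to $\tfrac12\|w(T)\|_{L^2(\Omega)}^2$ because $w(0)=0$; on the right-hand side an integration by parts in $x$ (boundary terms vanish as $w(\cdot,t)\in H^1_0(\Omega)$) replaces $\langle w_{xx},w\rangle$ by $-\langle w_x,w_x\rangle$, so that with $q:=w_x$ and Fubini the identity reads
\begin{equation}\label{uniq3}
\tfrac12\,\|w(T)\|_{L^2(\Omega)}^2=-\int_\Omega\left(\int_0^T q(t)\int_0^t G(t-\tau)\,q(\tau)\,d\tau\,dt\right)dx.
\end{equation}
Everything now depends on the sign of the inner double integral, which is precisely the assertion that $G$ is a kernel \emph{of positive type}. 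This is where (\ref{G}) is used in full: by a classical result, a nonnegative, nonincreasing, convex function on $(0,\infty)$ is of positive type, i.e.\ $\int_0^T\phi(t)\int_0^t G(t-\tau)\phi(\tau)\,d\tau\,dt\ge0$ for all $\phi\in L^2(0,T)$ and all $T>0$. Taking $\phi=q(x,\cdot)$ for a.e.\ $x$ forces the right-hand side of (\ref{uniq3}) to be $\le0$ while its left-hand side is $\ge0$; hence $w(T)=0$ for every $T>0$, so $w\equiv0$ and $u=v$.

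The main obstacle I anticipate is the positive-type property of $G$: it is the single point at which convexity and monotonicity enter, and it substitutes for the $\dot G\le0$, $\ddot G\ge0$ sign conditions of the regular Lemma~2.1, which are not at our disposal since (\ref{Gt}) endows $G$ with no derivatives. To keep the argument self-contained under this weak regularity I would record the positivity either by citation or by the representation $G(t)=G(\infty)+\int_0^\infty (s-t)_+\,d\mu(s)$, with $\mu=\ddot G\ge0$ a nonnegative measure, which reduces positive type to the two elementary kernels $G(\infty)$ (constant) and $(s-t)_+$ (truncated linear), the latter having nonnegative Fourier cosine transform $(1-\cos\omega s)/\omega^2$.
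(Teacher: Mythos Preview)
Your argument is correct but follows a genuinely different route from the paper's.  After setting up the difference equation $w(t)=\int_0^t K(t-\tau)\,w_{xx}(\tau)\,d\tau$, the paper does \emph{not} differentiate in time and does not appeal to any positive-type property of $G$.  Instead it expands in the Dirichlet eigenfunctions $-w^i_{xx}=\lambda_i w^i$ on $\Omega$: pairing the difference equation with $w^i$ gives
\[
(w(t),w^i)=-\lambda_i\int_0^t K(t-\tau)\,(w(\tau),w^i)\,d\tau,
\]
and since $K\ge 0$ is nondecreasing (because $G\ge 0$) one obtains $|(w(t),w^i)|\le \lambda_i K(T)\int_0^t|(w(\tau),w^i)|\,d\tau$; Gronwall then annihilates every Fourier coefficient, hence $w=0$.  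This is more elementary in that it uses only the positivity of $G$, not its monotonicity or convexity, whereas your energy argument consumes the full hypothesis~(\ref{G}) through the positive-type lemma.  On the other hand, your approach avoids the spectral decomposition and works verbatim on any domain and in any dimension; the positive-type estimate it isolates is precisely the structural property one would carry into the nonlinear and magneto-viscoelastic extensions alluded to in Section~5.
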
 

 \begin{proof}[\bf Proof.]
Assume (\ref{pb2}) admits two different solutions, say $v$ and $\tilde v$, 
then the linearity  implies that also any linear combination of them is again 
a solution to (\ref{pb2}). Thus, consider $w:=v-\tilde v$; it turns out to solve
\begin{equation}\label{uniqueness} 
 \displaystyle w({t})=\int_0^t {K}(t-\tau) w_{xx}(\tau) d\tau \,
 \end{equation}
 subject to homogeneous initial and boundary conditions, by definition.
 
 Let us denote by $w^i,\lambda_i$ the eigenfunctions and eigenvalues for the Dirichlet problem in $\Omega$ i.e. 
 satisfying
 \begin{equation}\label{uniqueness1}
  -w^i_{xx} = \lambda_i w^i \text{ in } \Omega,~~w^i =0 \text{ on } \partial\Omega.
   \end{equation}
 (Without lost of generality we can assume the eigenfunctions normalised).
 From \eqref{uniqueness}, multiplying by $w^i$ and integrating on $\Omega$ one deduces

 \begin{equation}\label{uniqueness2} 
 \displaystyle \int_\Omega w({t})w^i ~dx = - \int_0^t {K}(t-\tau)  \int_\Omega w_{x}(\tau) w^i_x ~dxd\tau \,
 \end{equation}
from which we derive by \eqref{uniqueness1} and denoting the $L^2$-scalar product by $(~,~)$

$$
(w({t}),w^i) = -\lambda_i \int_0^t {K}(t-\tau)  (w(\tau), w^i) ~d\tau .
$$
Taking absolute values it follows since $K$ is nondecreasing, nonnegative
$$
|(w({t}),w^i)| \leq K(T) \int_0^t |(w(\tau), w^i)| ~d\tau .
$$
So that by Gronwall's Lemma  $ (w({t}),w^i)=0~\forall i $ and thus $w=0$.
\end{proof}

\section{Perspectives and open problems}

The question which remains to answer to is whether it is possible to extend 
the obtained  results to the case when a magneto-viscoelastic body is considered. 
Indeed, the case of magneto-viscoelastic materials is very interesting both under the 
applicative viewpoint as well as under the mathematical one. 
Specifically, the parabolic equation that describes the evolution of the
magnetisation spin is modified by the introduction of a
further term when  the deformation effect is taken into account. There are many
publications on the subject, among them we quote the modelling of
magneto-elastic interactions described in \cite{[4]}, and the
variational analysis and some aspects on the modelling and
numerics considered in  \cite{[7],[13]}. When the mechanical response
also accounts of the past deformation, the mathematical problem
becomes an evolution problem with memory described by nonlinear
integro-differential equations. In the framework of the model proposed  in \cite{[3]},
the problem, when the two effects elastic and magnetic effects are coupled is studied 
in \cite{CSVVumi, CSVV, VV}. One of the  aspects is the change from parabolic to hyperbolic of the system, studied also in \cite{DeAngelis18}. 
The case of a magneto-viscoelastic material  where equations which couple the two effects 
of viscoelasticity and magnetisation are given, in \cite{CVV1, CVV2}.
 In  \cite{CVV1} we prove the existence and uniqueness of the solution admitted by 
  a $1$-dimensional  magneto-viscoelastic  problem. The existence result is
   generalised to the $3$-dimensional case in  \cite{CVV2} and to the case of 
   a singular kernel $1$-dimensional  magneto-viscoelastic problem in \cite{MGVS17}. 
   Singular kernel problems are studied also in \cite{[83], S2015}.
Currently under investigation is the coupling of the magnetic effects with  the 
viscoelasticity model considered in the present study.

\section*{\bf Acknowledgments.}
The authors gratefully acknowledge the 
 partial financial support of the Italian 
GNFM-INdAM and of  
Universit\`a di Roma {\textsc{La Sapienza}}. S. Carillo thanks also INFN-Sez. Roma1, Gr. MMNLP.
 M. Chipot thanks the Italian 
GNFM-INdAM and Dipartimento di Scienze di Base e Applicate
per l'Ingegneria, Universit\`{a}   di Roma  {\textsc{La Sapienza}},
for the kind hospitality.

\nocite{x}

\end{document}